\let\subset\subseteq
\DeclareMathOperator{\inc}{inc} 
\DeclareMathOperator{\pos}{pos}
\DeclareMathOperator{\tig}{tig}
\DeclareMathOperator{\cover}{cr} 
\newcommand{\Cset}{\mathcal{C}}
\newcommand{\Cbar}{\mathcal{C}^{-1}}
\def\blfootnote{\xdef\@thefnmark{}\@footnotetext}
\let\cref\Cref
\let\subset\subseteq
\tikzstyle{every node}=[draw,fill=white,shape=circle, inner sep = 1pt]
\begin{document}
\title{Drawing Order Diagrams Through Two-Dimension Extension}
%
%
\author{Dominik Dürrschnabel\inst{2,3} \and Tom Hanika\inst{1,2,3} \and Gerd
  Stumme\inst{2,3}}
%
%
\institute{
  Berlin School of Library and Information Science,\\ Humboldt University of
  Berlin, Berlin, Germany\\
  \and 
  Knowledge \& Data Engineering Group,
  University of Kassel, Kassel,  Germany\\
   \and
  Interdisciplinary Research Center for Information System Design\\
  University of Kassel, Kassel, Germany\\
  \email{duerrschnabel@cs.uni-kassel.de, hanika@cs.uni-kassel.de,
    stumme@cs.uni-kassel.de}
  }
\maketitle              
\blfootnote{Authors are given in alphabetical order.
  No priority in authorship is implied.}

\begin{abstract}
  Order diagrams are an important tool to visualize the complex structure of
  ordered sets. Favorable drawings of order diagrams, i.e., easily readable for
  humans, are hard to come by, even for small ordered sets. Many attempts were
  made to transfer classical graph drawing approaches to order
  diagrams. Although these methods produce satisfying results for some ordered
  sets, they unfortunately perform poorly in general. In this work we present
  the novel algorithm \texttt{DimDraw} to draw order diagrams. This algorithm is
  based on a relation between the dimension of an ordered set and the
  bipartiteness of a corresponding graph.
  
\keywords{Ordered~Sets \and Order~Diagrams \and Diagram~Drawing \and \mbox{Lattices}.}
\end{abstract}

\section{Introduction}\label{sec:intro}

\emph{Order diagrams}, also called \emph{line diagrams} or \emph{Hasse
  diagrams}, are a great tool for visualizing the underlying structure of
ordered sets. In particular they enable the reader to explore and interpret
complex information. In such diagrams every element is visualized by a point on
the plane. Each edge of the covering relation is visualized as an ascending line
connecting its points. These lines are not allowed to touch other points. These
strong requirements are often complemented with further soft conditions to
improve the readability of diagrams. For example, minimizing the number of
crossing lines or the number of different slopes. Another desirable condition is
to draw as many chains as possible on straight lines. Lastly, the distance of
points to (non-incident) lines should be maximized.

Experience shows that in order to obtain (human) readable drawings one has to
balance those criteria. Based on this notion, there are algorithms that produce
drawings of order diagrams optimizing towards some of the criteria mentioned
above. Drawings produced by such algorithms are sufficient to some
extent. However, they may not compete with those created manually by an
experienced human. However, such an expert is often not available, too
expensive, or not efficient enough to create a large number of order
diagrams. Hence, finding efficient algorithms that draw diagrams at a suitable
quality is still an open task. An exemplary requiring such algorithms is Formal
Concept Analysis (FCA) \cite{fca-book}, a theory that can be used to analyze and
cluster data through ordering it.

In this work we present a novel approach which does not employ the optimization
techniques as described above. For this we make use of the structure and its
properties that are already encapsulated in the ordered set. We base our idea on
the observation that ordered sets of order dimension two can be embedded into
the plane in a natural way. Building up on this we show a procedure to embed the
ordered sets of order dimension three and above by reducing them to the
two-dimensional case. To this end we prove an essential fact about
inclusion-maximal bipartite induced subgraphs in this realm. Based on this we
link the naturally emerging $\mathcal{NP}$-hard computation problem to a
formulation as a SAT problem instance. Our main contribution with respect to
this is \Cref{thm:main}.

We investigate our theoretical result on different real-world data sets using
the just introduced algorithm \texttt{DimDraw}. Furthermore, we note how to
incorporate heuristical approaches replacing the SAT solver for faster
computations. Finally, we discuss in part surprising observations and formulate
open questions.


\section{Related Work}
\label{sec:relatedwork}

Order diagrams can be considered as acyclic (intransitive) digraphs that are
drawn upward in the plane, i.e., every arc is a curve monotonically increasing
in $y$-direction. A lot of research has been conducted for such upward
drawings. A frequently employed algorithm-framework to draw such graphs is known
as Sugiyama Framework \cite{Sug81}. This algorithm first divides the set of
vertices of a graph into different layers, then embeds each layer on the same
$y$-coordinate and minimizes crossings between consecutive layers. Crossing
minimization can be a fundamental aesthetic for upward drawings. However the
underlying decision problem is known to be $\mathcal{NP}$-hard even for the case
of two-layered graphs \cite{eades1986edge}. A heuristic for crossing reduction
can be found in~\cite{eades1986median}. The special case for drawing rooted
trees can be solved using divide-and-conquer algorithms
\cite{journals/tse/ReingoldT81}. Such divide-and-conquer strategies can also be
used for non-trees as shown in~\cite{journals/tsmc/MessingerRH91}.  Several
algorithms were developed to work directly on order diagrams. Relevant for our
work is the dominance drawing approach. There, comparable elements of the order
relation are placed such that both Cartesian coordinates of one element are
greater than the ones of the other \cite{kelly1975planar}. Weak dominance
drawings allow a certain number of elements that are placed as if they were
comparable \cite{journals/corr/abs-1108-1439} even when they aren't. Our
approach is based on this idea. Previous attempts to develop heuristics are
described in~\cite{conf/gd/KornaropoulosT12a}.  If an ordered set is a lattice
there are algorithms that make use of the structure provided by this. The
authors in~\cite{stumme95geometrical} make use of geometrical representations
for drawings of lattices. In~\cite{freese2004automated} a force directed
approach is employed, together with a rank function to guarantee the ``upward
property'' is preserved. A focus on additive diagrams is laid out in
\cite{ganteradd}.

\section{Notations and Definitions}
\label{sec:notations}
We start by recollecting notations and notions from order theory
\cite{trotter1992combinatorics}. In this work we call a pair $(X,R)$ an
\emph{ordered set}, if $R\subset X \times X$ is an \emph{order relation} on a
set $X$, i.e., $R$ is reflexive, antisymmetric and transitive. In this setting
$X$ is called the \emph{ground set} of $(X,R)$. In some cases, we write
$(X,{\leq})$ instead of $(X,R)$ throughout this paper. We then use the
notations $(a,b)\in {\leq}$, $a \leq b$ and $b \geq a$ interchangeably. We write
$a<b$ iff $a \leq b$ and $a \neq b$. Alike, if $b \geq a$ and $b \neq a$, write
$b > a$. We say that a pair $(a,b)\in X \times X$ is \emph{comparable}, if
$a \leq b$ or $b \leq a$, otherwise it is \emph{incomparable}. An order relation
on $X$ is called \emph{linear} (or total) if all elements of $X$ are pairwise
comparable. For $(X,{\leq})$ the order relation $L$ on $X$ is called a
\emph{linear extension} of $\leq$, iff $L$ is a linear order and
${\leq} \subset L$. If $\mathcal{R}$ is a family of linear extensions of $\leq$
and ${\leq}=\bigcap_{ L \subset \mathcal{R}}L$, we call $\mathcal{R}$
\emph{realizer} of $\leq$. The minimal $d$ such that there is a realizer of
cardinality $d$ for  $(X,{\leq})$ is called its \emph{order
  dimension}. We use the denotation of order dimension for ordered sets and
order relations interchangeably. For a set $X$ and 
$\mathcal{C}\subset X \times X$ we denote
$\Cbar:=\{(a,b) \mid (b,a)\in \mathcal{C}\}$. The \emph{transitive closure} of
$R$ is denoted by $R^+$.

For our work we consider simple graphs denoted by $(V,E)$, where
$E \subset\binom{V}{2}$. For an ordered set $(X,{\leq})$, its \emph{comparability
  graph} is defined as the graph $(X,E)$, such that $\{a,b\}\in E$, if and only
if $a,b\in X$ are comparable. Similary the \emph{cocomparability graph}
(sometimes called incomparability graph) is the graph on $X$ where $\{a,b\}$ is
an edge if and only if $a,b\in X$ are incomparable. Two order relations on the
same ground set are called \emph{conjugate} to each other, if the comparability
graph of one is the cocomparability graph of the other.  We refrain from a
formal definition of a \emph{drawing} of $(X,{\leq})$. However we need to
discuss the elements of such drawings used in our work. Each element of the
ground set is drawn as a point on the plane. The \emph{cover relation} is
defined as
${\cover(X,{\leq})\coloneqq\{(a,b)\in {\leq}\mid \nexists c \in X : a < c <
  b\}}$. Each element of the cover relation is drawn as a monotonically
increasing curve connecting the points.

From here on some definitions are less common. For $(X,{\leq})$ we denote the
set of incomparable elements by $\inc(X,{\leq})$. Two elements
$(a,b),(c,d)\in \inc(X,{\leq})$ are called \emph{incompatible}, if their addition
to $\leq$ creates a cycle in the emerging relation, i.e., if there is some
sequence of elements $c_1,\ldots ,c_n\in X$, such that each pair
${(c_i,c_{i+1})\in{\leq}\cup(a,b)\cup(c,d)}$ with $i \in\{1,\ldots,n\}$ and
$c_{n+1}=c_1$. We call the graph $((\inc(X,{\leq}),E)$ with
$\{(a,b),(c,d)\}\in E$, iff $(a,b)$ and $(c,d)$ are incompatible the
\emph{transitive incompatibility graph}. Denote this graph by $\tig(X,{\leq})$.
We say a pair $(a,b)\in\inc(X,{\leq})$ \emph{enforces} another pair
$(c,d)\in\inc(X,{\leq})$, iff $(c,d)\in ({\leq}\cup(a,b))^+$. If and $(a,b)$
enforces $(c,d)$ in $\leq$, we write $(a,b)\rightarrow(c,d)$.

\section{Drawing Ordered Sets of Dimension Two}
\label{sec:2d}
Ordered sets of order dimension 2 have a natural way to be visualized using a
realizer by their dominance drawings \cite{kelly1975planar}. Let $(X,{\leq})$ be
an ordered set of dimension two. First define the \emph{position} for each $x\in X$ in a linear extension $L$ as the number of vertices that are
smaller, i.e., $\pos_{L}(x)\coloneqq\{y\in X\mid y < x\}$.

Now let $\mathcal{R}=|\{{\leq_1},{\leq_2}\}|$ be a realizer consisting of two
linear extensions of $\leq$.  Each element is embedded into a two-dimensional
grid at the pair of coordinates $(\pos_{\leq_1}(x),\pos_{\leq_2}(x))$. Embedding
this grid into the plane is done using the generating vector $(-1,1)$ for $x_1$
and $(1,1)$ for $x_2$. 
%
%
Each point now divides the plane into four quadrants using the two
lines that are parallel to $x_1$ and $x_2$. It holds that $a < b$, if and only
if the point $b$ is in the quadrant above the point $a$ by construction. Draw
the elements of the cover relation as straight lines. This guarantees that all
elements of the cover relation are drawn as monotonically increasing curves.
In order to compute such drawings, a preliminary check of the two-dimensionality
of the ordered set is required. If so computing a realizer in polynomial time is
possible as a result of the following theorem.

\begin{theorem}[Dushnik and Miller, 1941 \cite{dushnik1941partially}]
\label{thm:dushnikmiller}
The dimension of an ordered set $(X,{\leq})$ is at most 2, if and only if there is
a conjugate order $\leq_C$ on $X$. A realizer of $P$ is given by $\leq$ given by
$\mathcal{R}=\{{\leq}\cup{\leq_C},{\leq}\cup{\geq_C}\}$.
\end{theorem}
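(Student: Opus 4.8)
The statement is an equivalence, so the plan is to prove both implications and then verify that the exhibited family is indeed a realizer. I would keep the notation of the statement, writing $\geq_C$ for the inverse of $\leq_C$ and using that conjugacy means the comparable pairs of $\leq_C$ are exactly the incomparable pairs of $\leq$.

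For the direction assuming a conjugate order $\leq_C$ exists, I would first argue that $R_1 := {\leq}\cup{\leq_C}$ and $R_2 := {\leq}\cup{\geq_C}$ are each linear orders. Totality is immediate: any two distinct elements are either comparable in $\leq$, hence in both $R_i$, or incomparable in $\leq$ and therefore comparable in $\leq_C$ (and in $\geq_C$) by the conjugacy condition. Reflexivity is inherited, and antisymmetry follows because conjugacy forces the comparable pairs of $\leq$ and of $\leq_C$ to be disjoint, so the two relations cannot disagree on any single pair.

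The crux, and what I expect to be the main obstacle, is transitivity of $R_1$ in the mixed cases, say $a \leq b$ and $b \leq_C c$. Here I would rule out the reversed alternatives by contradiction: if $a$ and $c$ were comparable in $\leq$ with $c \leq a$, then $c \leq a \leq b$ would make $b,c$ comparable in $\leq$, contradicting $b \leq_C c$ (which requires $b,c$ incomparable in $\leq$); and if they were comparable in $\leq_C$ with $c \leq_C a$, then $b \leq_C c \leq_C a$ would force $b \leq_C a$, contradicting that $a,b$ are comparable in $\leq$. Totality then yields $a \leq c$ or $a \leq_C c$, i.e.\ $a R_1 c$. The symmetric mixed case, as well as both mixed cases for $R_2$, are handled analogously, using that $\geq_C$ shares the same comparability graph as $\leq_C$. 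For the intersection I would note $\leq \subseteq R_1 \cap R_2$ trivially, while for a pair incomparable in $\leq$ antisymmetry of $\leq_C$ places it in exactly one of $\leq_C$, $\geq_C$, hence in exactly one of $R_1$, $R_2$; therefore $R_1 \cap R_2 = {\leq}$ and $\{R_1,R_2\}$ is a realizer of size two.

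For the converse, given a realizer $\{L_1,L_2\}$ with ${\leq} = L_1 \cap L_2$, I would define $\leq_C$ as the set of pairs ordered oppositely by the two extensions, i.e.\ $a \leq_C b$ iff $a$ precedes $b$ in $L_1$ and $b$ precedes $a$ in $L_2$. As an intersection of the transitive relations $L_1$ and the dual of $L_2$, it is a partial order, and since two distinct elements are comparable in $\leq$ exactly when they agree in $L_1$ and $L_2$, they are comparable in $\leq_C$ exactly when they are incomparable in $\leq$, which is precisely the conjugacy condition. A short check that ${\leq}\cup{\leq_C} = L_1$ and ${\leq}\cup{\geq_C} = L_2$ then confirms that the realizer named in the statement recovers the original one, closing the argument.
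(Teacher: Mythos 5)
Your proof is correct. Note that the paper offers no proof of this statement at all---it is quoted as a classical result with a citation to Dushnik and Miller---so there is no in-paper argument to compare against; your write-up is the standard proof of this theorem, and all the essential points are in place: totality of ${\leq}\cup{\leq_C}$ via conjugacy, the mixed-case transitivity argument by excluding the two reversed alternatives and invoking totality, the exactly-one-of-$R_1,R_2$ argument for pairs incomparable in $\leq$ to get $R_1\cap R_2={\leq}$, and in the converse the definition $a\leq_C b$ iff $a$ precedes $b$ in $L_1$ and $b$ precedes $a$ in $L_2$, which is a partial order as the intersection of $L_1$ with the dual of $L_2$ and is conjugate since two total orders rank any distinct pair either identically or oppositely. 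Two trivial loose ends, neither a genuine gap: in the converse, ``dimension at most $2$'' includes linear orders, where a two-element family is obtained by duplicating a single linear extension (your construction then returns the diagonal as $\leq_C$, which is indeed conjugate to a linear order, so nothing breaks); and in the mixed transitivity cases the degenerate possibilities $a=b$, $b=c$, $a=c$ deserve a one-line dismissal before the comparability-versus-incomparability contradictions are invoked for distinct elements.
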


\noindent In 1977, Golumbic gave an algorithm \cite{golumbic1977complexity} to
check whether a graph is transitive orientable, i.e., whether there is an order
on its vertices, such that the graph is exactly the comparability graph of this
order. It computes such an order, if it exists. This algorithm runs in
$\mathcal{O}(n^3)$, with $n$ being the number of vertices of the
graph. Combining this with \Cref{thm:dushnikmiller} provides an algorithm to
compute whether an ordered set is two-dimensional. Furthermore, it also returns a
realizer, in the case of two-dimensionality. For the sake of completeness, note that
there are faster algorithms (as fast as linear)
\cite{Mcconnell97linear-timetransitive}) for computing transitive
orientations. However those only work if the graph is actually transitive
orientable and return erroneous results otherwise.  On a final note, deciding
dimension $3$ or larger is known to be $\mathcal{NP}$-complete as shown by
Yannakis in 1982 \cite{yannakakis1982complexity}. This is in contrast to the
just stated fact about two-dimensional orders.

\section{Drawing Ordered Sets of Higher Dimensions}
\label{sec:3d}

The idea of embedding two-dimensional orders does generalize in a natural way to
higher dimensions, i.e., gives us a nice way to embed $n$-dimensional ordered
sets into $n$-dimensional space (Euclidean space) by using an $n$-dimensional
realizer. However, projecting an ordered set from a higher dimension into the
plane turns out to be hard. See, for example, the projections of an ordered set
in example \Cref{fig:project}.  For this reason our algorithm makes use of a
different method to compute drawings of order diagrams for higher dimensions.

\begin{figure}[t]
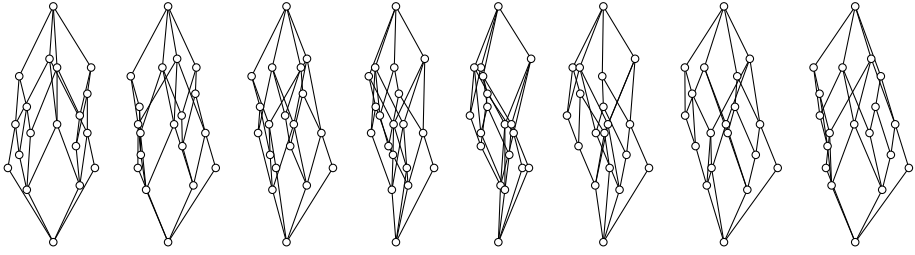

  \centering
  \includegraphics[height=10em]{img0}\hfill
  \includegraphics[height=10em]{img1}\hfill
  \includegraphics[height=10em]{img2}\hfill
  \includegraphics[height=10em]{img3}\hfill
  \includegraphics[height=10em]{img4}\hfill
  \includegraphics[height=10em]{img5}\hfill
  \includegraphics[height=10em]{img6}\hfill
  \includegraphics[height=10em]{img7}
  \caption{A three-dimensional ordered set embedded -- based on its realizer -- into
    three dimensional Euclidean space and then projected into the plane using a
    parallel projection from multiple angles. Even though the structure of the
    ordered set is recognizable, the drawings are all not satisfactory.}
  \label{fig:project}
\end{figure}

In short, the main idea of this section is the following: for a given order
relation we want to insert some number of additional pairs in order to make it
two-dimensional. This allows the resulting order to be drawn using the algorithm
for the two-dimensional case described in the previous section. Afterwards we
once again remove all the inserted pairs. By construction, the property that if
$a < b$, the point $a$ is inserted below $b$ is still preserved.  Such drawings
are sometimes called \emph{weak dominance drawings}
\cite{journals/corr/abs-1108-1439}. However, for each inserted pair $(a,b)$, we
obtain two points in the drawing that are drawn as if $a$ and $b$ were
comparable. This poses the question for minimizing the number of inserted pairs.

\begin{definition}
  Let $(X,{\leq})$ be an ordered set. A set $\mathcal{C}\subset \inc(X)$ is called
  a \emph{two-dimension-extension} of $(X,{\leq})$, iff ${\leq}\cup\mathcal{C}$ is
  an order on $X$ and the ordered set $(X,{\leq}\cup\mathcal{C})$ is
  two-dimensional.
\end{definition}
Such an extension always exists: a linear extension of dimension one
always exists. If an order contains exactly one incomparable pair it has
dimension two.

It is known to be $\mathcal{NP}$-complete to decide whether an ordered set can
be altered to be two-dimensional by inserting $k$ pairs \cite{Brightwell_2013}.
Hence, we propose an algorithm that tackles the problem for approximating the
corresponding optimization problem. The idea of the algorithm is based on the
following theorem.

\begin{theorem}[Doignon et.al., 1984 \cite{doignon1984realizable}]\label{thm:doignon}
  The ordered set $(X,{\leq})$ has order-dimension two if and only if
  $\tig(X,{\leq})$ is bipartite.
\end{theorem}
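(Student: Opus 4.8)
The plan is to prove the two directions of the equivalence separately, connecting order dimension two to bipartiteness of $\tig(X,{\leq})$ via \Cref{thm:doignon}'s predecessors, in particular leaning on the characterization from \Cref{thm:dushnikmiller}. Recall that a graph is bipartite if and only if it has no odd cycle, and also if and only if its vertex set admits a $2$-coloring. My strategy throughout is to identify a proper $2$-coloring of $\tig(X,{\leq})$ with a conjugate order, so that \Cref{thm:dushnikmiller} can be invoked to transfer between the combinatorial (bipartite) and the order-theoretic (dimension-two) statements.

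\medskip
\noindent\textbf{Forward direction.} First I would assume $(X,{\leq})$ has dimension two and show $\tig(X,{\leq})$ is bipartite. By \Cref{thm:dushnikmiller}, dimension two gives a conjugate order $\leq_C$ whose comparability graph is the cocomparability graph of $\leq$; equivalently, every incomparable pair $\{a,b\}$ of $\leq$ is comparable in $\leq_C$, so $\leq_C$ orients the incomparable pairs. I would use this orientation to define a partition of the vertex set $\inc(X,{\leq})$. The natural attempt is to color each incomparable pair by whether $\leq_C$ orders it ``with'' or ``against'' the underlying direction, and then argue that two incompatible pairs—pairs whose simultaneous addition to $\leq$ forces a cycle—must receive opposite colors. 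The key computation is to show that if $(a,b)$ and $(c,d)$ are incompatible, then the conjugate order $\leq_C$ cannot orient both of them in the same ``phase,'' because doing so would contradict either the transitivity of $\leq_C$ or the fact that $\leq_C$ and $\leq$ together realize a linear structure. This establishes that the coloring is proper, hence $\tig(X,{\leq})$ is bipartite.

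\medskip
\noindent\textbf{Reverse direction.} Conversely, assume $\tig(X,{\leq})$ is bipartite with color classes, say, $A$ and $B$. I would construct a conjugate order $\leq_C$ from this bipartition: orient each incomparable pair according to its color class, and take the transitive closure together with the appropriate orientation on comparable pairs. The goal is to verify that the resulting relation is antisymmetric and transitive—i.e., a genuine order whose comparability graph is the cocomparability graph of $\leq$. Bipartiteness (no odd cycle) is exactly what should guarantee that this orientation closes up without creating a directed cycle: an antisymmetry or transitivity violation in $\leq_C$ would trace back to a conflicting chain of incompatible pairs, which corresponds to an odd closed walk in $\tig(X,{\leq})$ and is ruled out by bipartiteness. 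Once $\leq_C$ is shown to be a conjugate order, \Cref{thm:dushnikmiller} immediately yields that $(X,{\leq})$ has dimension at most two, and since any order with an incomparable pair has dimension at least two, dimension is exactly two.

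\medskip
The main obstacle I anticipate is making the correspondence between colorings of $\tig(X,{\leq})$ and conjugate orders fully rigorous in both directions. Defining the coloring from $\leq_C$ is cleaner; the harder part is the reverse, where I must check that a proper $2$-coloring yields a relation that is genuinely transitive and acyclic. The subtlety is that incompatibility is defined through the transitive closure ${\leq}\cup(a,b)\cup(c,d)$ using \emph{arbitrary} intermediate chains, so the ``odd cycle'' obstruction in $\tig(X,{\leq})$ must be carefully matched to a cyclic enforcement chain in the candidate order $\leq_C$. Handling the interaction between enforced pairs (via the relation $\rightarrow$) and the incompatibility edges, and confirming that no even/odd parity mismatch sneaks in, is where the bulk of the careful argument will lie.
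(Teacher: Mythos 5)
Your forward direction is sound and can be made rigorous exactly as you sketch it: given a conjugate order $\leq_C$, color a vertex $(a,b)$ of $\tig(X,{\leq})$ by whether $(a,b)\in{\leq_C}$ or $(b,a)\in{\leq_C}$, and properness follows from \Cref{thm:ring}~(i): if $(a,b)$ and $(c,d)$ are incompatible then $d\leq a$ and $b\leq c$, so if both carried the same color the linear extension ${\leq}\cup{\leq_C}$ (respectively ${\leq}\cup{\geq_C}$, which \Cref{thm:dushnikmiller} guarantees are linear) would contain the cycle $a<b\leq c<d\leq a$, a contradiction. Note, for calibration, that the paper itself does not prove this theorem at all --- it imports it from Doignon et al.\ and only quotes one ingredient of their argument as \Cref{thm:strict} --- so the comparison here is against what a correct proof must contain, not against an in-paper argument.

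The reverse direction, however, has a genuine gap: the claim that bipartiteness ``is exactly what should guarantee that this orientation closes up without creating a directed cycle'' is false, because \emph{not every} proper $2$-coloring of $\tig(X,{\leq})$ induces a conjugate order. Take $X=\{a,b,c\}$ to be a three-element antichain. Then the vertices of $\tig(X,{\leq})$ are all six ordered pairs, and by \Cref{thm:ring} the only incompatibilities are between $(x,y)$ and $(y,x)$, so the graph is a perfect matching on six vertices and trivially bipartite. The partition $A=\{(a,b),(b,c),(c,a)\}$, $B=A^{-1}$ is a proper $2$-coloring, yet the transitive closure of $A$ contains the directed cycle $a<b<c<a$: antisymmetry fails, and no odd cycle of $\tig(X,{\leq})$ witnesses this, since the three vertices of $A$ are pairwise non-adjacent. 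So the obstruction to ``closing up'' is \emph{not} an odd closed walk in the incompatibility graph, and your parity-matching plan cannot detect it. The actual content of the hard direction is to show that, when $\tig(X,{\leq})$ is bipartite, \emph{some} bipartition exists whose color class is a transitive orientation of the cocomparability graph; this is precisely where Doignon et al.\ invoke the strict-cycle refinement (\Cref{thm:strict}) and an equality of chromatic numbers between two hypergraphs (cycles versus strict cycles), machinery your proposal names as a worry in its final paragraph but does not supply. As written, the reverse direction would fail on the example above, so the proposal is incomplete at its decisive step.
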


\noindent Thus, for $(X,{\leq})$ of dimension greater than two, 
$\tig(X,{\leq})$ is non-bipartite. We want to find a maximal induced
bipartite subgraph of $\tig(X,{\leq})$.  

\begin{lemma}\label{thm:ring}
  Let $(X,{\leq})$ be an ordered set and $(a,b),(c,d)\in\inc(X,{\leq})$. Then the
  following are equivalent:
  \begin{enumerate}[label=\roman*)]
  \item $d \leq a$ and $b \leq c$.
  \item $(a,b)\rightarrow(d,c)$.
  \item $(c,d)\rightarrow(b,a)$.
  \item $(a,b)$ and $(c,d)$ are incompatible.
  \item $(b,a)$ and $(d,c)$ are incompatible.
  \end{enumerate}
\end{lemma}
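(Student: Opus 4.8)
The plan is to reduce all five items to the single order-theoretic condition (i) by first pinning down what adding one incomparable pair does to the relation. The tool I would establish is a description of the transitive closure after inserting a single edge: for $(x,y)\in\inc(X,{\leq})$,
$({\leq}\cup\{(x,y)\})^{+}={\leq}\cup\{(u,v)\mid u\leq x\text{ and }y\leq v\}$.
The inclusion ``$\supseteq$'' is immediate, since $u\leq x$, the inserted edge $x\to y$, and $y\leq v$ compose into a directed path. For ``$\subseteq$'' I would verify that the right-hand side is already reflexive, contains ${\leq}\cup\{(x,y)\}$, and is transitive, by a four-way case split according to whether each of two composable pairs is old (in $\leq$) or new; a transitive relation containing the generators contains the closure. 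Incomparability of $(x,y)$ enters only to rule out traversing the new edge twice, as a second use would require $y\leq x$, contradicting $(x,y)\in\inc(X,{\leq})$.

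Given this description, the enforcement items fall out by substitution. For (ii), $(a,b)\rightarrow(d,c)$ says $(d,c)\in({\leq}\cup\{(a,b)\})^{+}$, i.e.\ $d\leq c$, or else $d\leq a$ and $b\leq c$; since $(c,d)\in\inc(X,{\leq})$ forbids $d\leq c$, this collapses to exactly (i). For (iii), $(c,d)\rightarrow(b,a)$ says $(b,a)\in({\leq}\cup\{(c,d)\})^{+}$, i.e.\ $b\leq a$, or else $b\leq c$ and $d\leq a$; incomparability of $(a,b)$ discards $b\leq a$ and again leaves (i). Thus (i), (ii) and (iii) are equivalent with no work beyond the substitution and the two incomparability checks.

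The remaining items are about directed cycles. For (iv) I would argue that any cycle in ${\leq}\cup\{(a,b)\}\cup\{(c,d)\}$ must use the added edges, because $\leq$ alone is acyclic, and cannot use only one of them, since a cycle through a single new edge would force $b\leq a$ or $d\leq c$, contradicting incomparability; hence a cycle uses both, and routing it as $a\to b$, a $\leq$-path to $c$, the edge $c\to d$, and a $\leq$-path back to $a$, forces $b\leq c$ and $d\leq a$, which is (i). Conversely (i) exhibits exactly this cycle. Item (v) is the same assertion for the reversed pairs $(b,a),(d,c)$, which I would settle by running the identical cycle argument on ${\leq}\cup\{(b,a)\}\cup\{(d,c)\}$ and matching the resulting order condition against (i). I expect the main obstacle to lie precisely in this cycle bookkeeping: justifying that an inserted incomparable edge is used at most once, that a closing cycle must traverse both inserted edges, and that the orientation of each pair is tracked correctly so that the derived inequalities line up with (i) rather than with their reverses.
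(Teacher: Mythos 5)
Your closure description of $({\leq}\cup\{(x,y)\})^{+}$ and the resulting proofs of (i)$\Leftrightarrow$(ii)$\Leftrightarrow$(iii)$\Leftrightarrow$(iv) are correct, and they follow essentially the same route as the paper, only more carefully: the paper's step from (ii) to (i) silently uses exactly your formula (an element of the closure is either an old pair or sandwiches the inserted edge), and its proof of (i)$\Rightarrow$(iv) is precisely your explicit cycle $d\prec a\prec b\prec c\prec d$. Your alternation argument for (iv)$\Rightarrow$(i), including the point that an inserted incomparable edge cannot be traversed twice in succession, is also sound.

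The step that fails is (v), and it fails for exactly the reason you flagged as the likely ``bookkeeping'' obstacle. Running your own cycle analysis on ${\leq}\cup\{(b,a)\}\cup\{(d,c)\}$, a nontrivial directed cycle must alternate the arcs $b\to a$ and $d\to c$ with $\leq$-paths from $a$ to $d$ and from $c$ to $b$; hence (v) is equivalent to $a\leq d$ and $c\leq b$, i.e., to the \emph{reverses} of the inequalities in (i), and no rerouting changes this. The mismatch is not reparable: take $X=\{a,b,c,d\}$ with $\leq$ the reflexive closure of $\{(d,a),(b,c)\}$. Then $(a,b),(c,d)\in\inc(X,{\leq})$ and (i)--(iv) all hold, yet ${\leq}\cup\{(b,a),(d,c)\}$ is acyclic, since every nonreflexive arc runs from $\{b,d\}$ into $\{a,c\}$; so (v) fails under the paper's stated definition of incompatibility. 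In other words, with that definition item (v) of the lemma is itself false, so your plan to ``match the resulting order condition against (i)'' cannot be completed by anyone. You are in good company: the paper's own proof disposes of (v) with ``Analogously $(i)\Rightarrow(v)$'' and ``By the same argument $(v)\Rightarrow(iii)$'', and the analogous computation produces the reversed inequalities there as well. Item (v) becomes true only if incompatibility is read with respect to the dual order $(X,{\geq})$, i.e., if the order is reversed along with the two pairs; note that the remainder of the paper only ever invokes (ii)$\Leftrightarrow$(iii) and (ii)$\Leftrightarrow$(iv), so nothing downstream depends on (v).
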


\begin{proof}
  $(i)\Rightarrow(iv)$. Consider the relation
  $\prec \coloneqq{\leq} \cup (a,b) \cup (c,d)$. This yields
  $d \prec a \prec b \prec c \prec d$, i.e, $\prec$ contains a
  cycle. Analogously $(i) \Rightarrow (v)$.\\%
  $(iv)\Rightarrow(ii)$. The assumption directly implies that
  $(d,c) \in ({\leq}\cup(a,b))^+$ due to the cycle generated by $(a,b)$ and
  $(c,d)$. By the same argument $(v)\Rightarrow (iii)$.\\%
  $(ii)\Rightarrow(i)$. Assume $d\not \leq a$ or $b \not \leq c$. Since
  $(d,c)\in({\leq}\cup(a,b))^+$ it follows $(d,c)\in {\leq}$ which contradicts
  $(c,d)\in \inc(X,\leq)$. Similarly follows $(iii)\Rightarrow (i)$.
  \null\hfill $\square$
\end{proof}

\noindent Recall the definition of $\tig(X,{\leq})$ defined on $\inc(X,{\leq})$
with incompatible pairs being connected. Call a cycle in
$\tig(X,{\leq})$  \emph{strict}, iff for each two adjacent pairs $(a,b)$ and
$(c,d)$ it holds that $d < a$ and $b < c$. A strict path is defined analogously.

\begin{lemma}[Doignon et al., \cite{doignon1984realizable}]
  \label{thm:strict}
  Let $(X,{\leq})$ be an ordered set and let the pair $v\in\inc(X,{\leq})$. Then the
  following statements are equivalent:
  \begin{enumerate}[label=\roman*)]
  \item $v$ is in contained in an odd cycle in $\tig(X,{\leq})$.
  \item $v$ is contained in a strict odd cycle in $\tig(X,{\leq})$.
  \end{enumerate}
\end{lemma}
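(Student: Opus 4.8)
\emph{The easy direction.} The implication $(ii)\Rightarrow(i)$ is immediate. If for every pair of adjacent vertices $(a,b),(c,d)$ of the cycle we have $d<a$ and $b<c$, then in particular $d\le a$ and $b\le c$, so $(a,b)$ and $(c,d)$ are incompatible by \Cref{thm:ring}; hence a strict odd cycle is an odd cycle of $\tig(X,{\leq})$, and it of course still passes through $v$. The whole content of the lemma is therefore the converse $(i)\Rightarrow(ii)$.

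\emph{Reformulation.} Using \Cref{thm:ring} I would first translate the combinatorics into order relations. Two vertices $(a,b),(c,d)$ are adjacent exactly when $d\le a$ and $b\le c$, and the edge is strict exactly when both inequalities are strict; thus an edge fails to be strict precisely when $d=a$ or $b=c$. Because both endpoints lie in $\inc(X,{\leq})$, such an equality forces the two pairs to share a ground-set element in the pattern $(a,e),(e,b')$. Reversing the orientation in which the cycle is traversed turns a $d=a$ equality into a $b=c$ equality, so it suffices to treat non-strict edges of the form $v_i=(a_i,e)$, $v_{i+1}=(e,b_{i+1})$, where adjacency additionally gives $b_{i+1}\le a_i$.

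\emph{The main argument.} I would run an extremal argument: among all odd cycles of $\tig(X,{\leq})$ through $v$ pick one $C=(v_0,\dots,v_{n-1})$, $v_0=v$, of minimum length $n$, and show it is strict. Suppose not, and let $\{v_i,v_{i+1}\}$ be a non-strict edge in the shared-element form above, with for the moment $v\notin\{v_i,v_{i+1}\}$. Reading off the two flanking edges through \Cref{thm:ring} gives $e\le a_{i-1}$ (from $v_{i-1}\sim v_i$) and $b_{i+2}\le e$ (from $v_{i+1}\sim v_{i+2}$), and transitivity of $\le$ yields $b_{i+2}\le a_{i-1}$. This is precisely one of the two conditions of \Cref{thm:ring} for $v_{i-1}$ and $v_{i+2}$ to be adjacent. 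If the complementary condition $b_{i-1}\le a_{i+2}$ also holds, then $\{v_{i-1},v_{i+2}\}$ is an edge, and deleting the two vertices $v_i,v_{i+1}$ leaves the distinct vertices $v_0,\dots,v_{i-1},v_{i+2},\dots,v_{n-1}$ with consecutive ones adjacent, i.e.\ a genuine odd cycle of length $n-2$ through $v$, contradicting the minimality of $n$. Hence the minimal cycle has no such contractible non-strict edge.

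\emph{The main obstacle.} The difficulty is exactly the remaining case, in which $b_{i-1}\not\le a_{i+2}$, so the length-two contraction does not close up, together with the case $v\in\{v_i,v_{i+1}\}$, where a move deleting both shared-element vertices would destroy passage through $v$. Here I expect to have to replace the non-strict edge by a \emph{strict} detour of the right parity -- concretely, to show that $v_i$ and $v_{i+1}$ are joined by a strict path of odd length built from pairs forced by the chain $b_{i+2}\le e\le a_{i-1}$ and the comparability $b_{i+1}\le a_i$ -- so that substituting it into $C$ preserves both oddness and passage through $v$ while strictly decreasing the number of non-strict edges. Verifying that the detour pairs lie in $\inc(X,{\leq})$, remain adjacent to the surviving neighbours, and keep the parity is the technical heart; each individual check reduces through \Cref{thm:ring} to transitivity of $\le$, but packaging them into a single terminating reduction, and handling the edges incident to $v$ itself, is where the real work lies.
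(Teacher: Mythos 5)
Your attempt is honestly flagged as incomplete, and the missing piece is not a finishing detail but the entire substance of the lemma. The easy direction and the reformulation via \Cref{thm:ring} are correct (a non-strict edge is precisely one between pairs sharing a ground-set element), and the contraction step is locally sound: from $e\leq a_{i-1}$ and $b_{i+2}\leq e$ you do obtain half of the incompatibility condition for $v_{i-1}$ and $v_{i+2}$, and under length-minimality the shortcut yields a contradiction when it applies. But in the remaining cases --- $b_{i-1}\not\leq a_{i+2}$, or $v\in\{v_i,v_{i+1}\}$ --- you only postulate a strict detour of odd length without constructing it. Note that such a detour can never have length one: any edge between $(a_i,e)$ and $(e,b_{i+1})$ fails strictness because $e<e$ is impossible, so a genuine replacement path with new vertices is unavoidable, and you have neither exhibited the required pairs nor verified they lie in $\inc(X,{\leq})$. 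There are further structural problems with the packaging: your two moves pull the extremal measure in opposite directions (contraction shortens the cycle, a detour lengthens it while decreasing the non-strict count), so neither length nor non-strict count nor their lexicographic combinations is monotone under both moves, and termination of the reduction is not established; a detour's internal vertices may coincide with other cycle vertices, turning the cycle into a closed walk, and extracting an odd cycle from an odd closed walk does not preserve passage through $v$; and in the degenerate case $n=3$ with $v$ opposite the non-strict edge, contraction collapses entirely ($v_{i-1}=v_{i+2}=v$, and deleting two vertices of a triangle leaves no cycle). Any one of these would need to be repaired; together they mean the proposal is a plausible opening, not a proof.

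For comparison, the paper itself does not prove this lemma either: it quotes it from Doignon et al.\ \cite{doignon1984realizable} and remarks that the statement is implicit in their proof of Proposition 2, which establishes the equality of the chromatic numbers of two hypergraphs --- one whose edges arise from odd cycles of $\tig(X,{\leq})$ and one whose edges arise from strict odd cycles. The known argument thus works globally at the level of hypergraph colorings rather than by local surgery on a single extremal cycle, and the step you were unable to supply --- trading a non-strict, shared-element edge for strict incompatibilities --- is exactly the verification that the coloring argument absorbs. If you want a self-contained proof, that reference is the place to start.
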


\begin{remark}
  This is stated implicitly in their proof of Proposition 2, verifying the
  equality between the two chromatic numbers of a hypergraph corresponding to
  our cycles and a hypergraph corresponding to our strict cycles.
\end{remark}

\begin{theorem}\label{thm:main}
  Let $(X,{\leq})$ an ordered set. Let $\Cset\subset\inc(X,{\leq})$ be minimal with
  respect to set inclusion, such that $\tig(X,{\leq})\backslash\Cbar$ is
  bipartite. Then $(X,{\leq}\cup\Cset)$ is an ordered set.
\end{theorem}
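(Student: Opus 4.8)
The plan is to reduce the statement ``$(X,{\leq}\cup\Cset)$ is an ordered set'' to an acyclicity claim and then argue by contradiction. Since $\leq$ is already a partial order and every element of $\Cset$ is an incomparable pair, ${\leq}\cup\Cset$ is automatically reflexive, and the only way it can fail to generate an order is by closing a directed cycle; so it suffices to show that adding $\Cset$ to $\leq$ creates no cycle (antisymmetry is just the length-two instance, and $({\leq}\cup\Cset)^+$ is then a partial order extending $\leq$). First I would record the basic building block coming from \Cref{thm:ring}: if $(a,b)$ and $(c,d)$ are adjacent in $\tig(X,{\leq})$ then by $(iv)\Rightarrow(i)$ we have $d\leq a$ and $b\leq c$, so adding both to $\leq$ forces the chain $a<b\leq c<d\leq a$, a $2$-cycle. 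Hence any $\Cset$ that yields an order must be an \emph{independent set} in $\tig(X,{\leq})$, and in particular cannot contain both $(a,b)$ and $(b,a)$ (these are always adjacent, since $(x,y)\sim(y,x)$ for every incomparable pair).

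The second ingredient I would isolate is that reversal is an automorphism of $\tig(X,{\leq})$: by the equivalence $(iv)\Leftrightarrow(v)$ of \Cref{thm:ring}, $(a,b)\sim(c,d)$ if and only if $(b,a)\sim(d,c)$, and $\Cbar$ is precisely the image of $\Cset$ under this reversal. This symmetry is what ties the deleted vertices to the kept ones and should govern how a bipartition of $\tig(X,{\leq})\backslash\Cbar$ interacts with $\Cset$.

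For the contradiction I would assume ${\leq}\cup\Cset$ contains a directed cycle, contract the maximal $\leq$-subpaths, and obtain a shortest cyclic configuration of $\Cset$-pairs $(a_1,b_1),\dots,(a_n,b_n)$ with $b_i\leq a_{i+1}$ (indices cyclically, $n\geq 2$ by the independence observation). The inequalities $b_{i-1}\leq a_i$ and $b_i\leq a_{i+1}$ translate, via \Cref{thm:ring}, into adjacencies in $\tig(X,{\leq})$, and \Cref{thm:strict} lets me upgrade the resulting odd closed structure to a \emph{strict} one, which is exactly what is needed for the inequalities $b_i\leq a_{i+1}$ to line up with strict $\tig$-adjacencies $d<a,\ b<c$. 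The aim is then either to exhibit a strict odd cycle living entirely among the surviving vertices $\inc(X,{\leq})\backslash\Cbar$, contradicting bipartiteness, or to show that one deleted reverse $(b_j,a_j)\in\Cbar$ is superfluous — i.e.\ its surviving neighbours are monochromatic, so it can be reinserted while keeping the graph bipartite — contradicting the minimality of $\Cset$.

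I expect the genuine obstacle to be exactly this last step: routing the obstruction through surviving vertices and making the minimality bite. The naive odd cycle read off from the $\Cset$-configuration tends to pass through the deleted reverses $(b_i,a_i)\in\Cbar$, so it yields no direct contradiction with bipartiteness; the content of the theorem is that minimality forces every deleted reverse to sit on an odd cycle of the reinserted graph, and one must show that a directed cycle in ${\leq}\cup\Cset$ renders one of these deletions unnecessary. Here the reversal automorphism and \Cref{thm:strict} are the levers I would lean on, the former to relate the kept pair $(a_j,b_j)$ to its deleted mirror $(b_j,a_j)$, the latter to guarantee that the witnessing cycle can be taken strict and hence compatible with the order-theoretic inequalities furnished by the assumed cycle.
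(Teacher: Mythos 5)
You have correctly assembled the paper's toolkit---\Cref{thm:ring}, \Cref{thm:strict}, the reversal symmetry of $\tig(X,{\leq})$, and the principle that minimality forces every deleted vertex to lie on an odd cycle once re-inserted---but there are two genuine gaps. The first is your opening reduction: you replace the claim ``$(X,{\leq}\cup\Cset)$ is an ordered set'' by the acyclicity of ${\leq}\cup\Cset$. Acyclicity gives reflexivity and antisymmetry (and that the \emph{transitive closure} $({\leq}\cup\Cset)^+$ is an order), but the theorem asserts that ${\leq}\cup\Cset$ is \emph{itself} transitive, and transitivity can fail without any directed cycle: if $(x,y),(y,z)\in\Cset$ but $(x,z)\notin{\leq}\cup\Cset$, no cycle closes, yet the relation is not an order. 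This is not a pedantic point---it is precisely where the paper's proof spends almost all of its effort. The hard case there is $(x,y),(y,z)\in\Cset$ with $(x,z)\notin\Cset$, and the contradiction is not with acyclicity but with bipartiteness/minimality: using \Cref{thm:strict}, the re-inserted reverses of $(x,y)$ and $(y,z)$ lie on \emph{strict} odd cycles, whose neighbours $(a,b),(c,d)$ and $(e,f),(g,h)$ yield inequalities that let one splice together explicit segments $A=(c,f)(z,x)(h,a)$, $B=(d,h)(h,g)$, $C=(f,e)(e,b)$, joined by even paths built from a strict path $(a_1,b_1)\ldots(a_{2k},b_{2k})$ via the alternating-reversal trick $(h,a_1)(b_2,h)(h,a_3)\cdots(b_{2k},h)$; the result is an odd cycle through $(z,x)$ inside the surviving bipartite graph, a contradiction. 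Your proposal names exactly this step as ``the genuine obstacle'' and offers two candidate endgames without carrying either out, so the core of the proof is missing.

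The second gap is a circularity: you derive that $\Cset$ must be an independent set of $\tig(X,{\leq})$ from the premise that ``$\Cset$ yields an order''---but that is the conclusion you are trying to prove, so you cannot use it to bound your contracted cycles (your ``$n\geq 2$ by the independence observation,'' and more importantly the exclusion of the $n=2$ case, which \emph{is} adjacency of two $\Cset$-pairs in $\tig(X,{\leq})$). In the paper the analogous facts are Claims $(\star)$ and $(\star\star)$, and they are deduced from \emph{minimality} by the re-insertion argument: if a deleted vertex had no incompatible partner inside a bipartition class, it could be restored, contradicting inclusion-minimality of $\Cset$; combined with transitivity of the enforcement relation $\rightarrow$ this yields $(x,y)\in\Cset\Rightarrow(y,x)\notin\Cset$ and the propagation property that drives both antisymmetry and the easy transitivity cases. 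So while your instincts about which lemmas matter are sound and match the paper's, the proposal as written proves a weaker statement, assumes a key independence fact it would need to establish from minimality, and leaves the decisive odd-cycle construction unexecuted.
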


\begin{proof}
  Refer to the bipartition elements of $\tig(X,{\leq})\backslash\Cbar$ with
  $P_1$ and $P_2$.

  \begin{claim}
    The arrow relation is transitive, i.e., if $(a,b)\rightarrow(c,d)$ and
    $(c,d) \rightarrow (e,f)$ then $(a,b)\rightarrow (e,f)$. If
    $(a,b)\rightarrow (c,d)$ then $c\leq a$ and $b\leq d$ by
    definition. Similarly $(c,d)\rightarrow (e,f)$ implies $e \leq c$ and
    $d \leq f$. By transitivity of $\leq$ this yields that $e\leq a$ and
    $b \leq f$ which in turn implies that $(a,b)\rightarrow(d,f)$.
  \end{claim}  

  \begin{claim}[$\star$]Let $(a,b),(c,d)\in\inc(X,{\leq})$ with
    $(a,b)\not \in \Cbar$ and $(a,b)\rightarrow (c,d)$, then
    $(c,d)\not \in \Cbar$. Assume the opposite, i.e., $(c,d) \in \Cbar$. Without
    loss of generality let $(a,b) \in P_1$. As $(c,d) \in \Cbar$ there has
    to be a pair $(e,f)\in P_1$ that is incompatible to $(c,d)$, i.e.,
    $(e,f)\rightarrow (d,c)$, otherwise $(c,d)$ can be added to $P_1$ without
    destroying the independet set. However $(a,b)\to (c,d)$ is
    equivalent to $(d,c)\rightarrow (b,a)$ and yields together with the transitivity
    of the arrow relation $(e,f)\rightarrow (b,a)$. But than $(e,f)$ and $(a,b)$
    are incompatible, a contradiction since both are in the independent set
    $P_1$.
  \end{claim}
  
  \begin{claim}[$\star \star$]
    If $(x,y)\in \Cset$, then $(y,x)\not\in\Cset$. As $(y,x)\in \Cbar$, there is
    a pair $(a,b)$ in $P_1$, such that $(a,b)$ and $(y,x)$ are incompatible,
    i.e., $(a,b)\rightarrow (x,y)$, otherwise $(y,x)$ can be added to
    $P_1$. However, since $(a,b)\not \in \Cbar$ follows $(x,y)\not \in \Cbar$ by
    Claim ($\star$).
  \end{claim}

  \noindent Reflexivity: $\forall x\in X$ we have
  $(x,x)\in {\leq}\subset {\leq}\cup\Cset$.

  \noindent Antisymmetry: assume $(x,y)\in {\leq}\cup\Cset$ and
  $(y,x)\in{\leq}\cup\Cset$. We have to consider three cases. First,
  $(x,y)\in{\leq}$ and $(y,x)\in{\leq}$. Then $x=y$, as $\leq$ is an order
  relation. %
  Secondly, $(x,y)\in{\leq}$ and $(y,x)\in\Cset$. If $(x,y)\in{\leq}$, then $x$
  and $y$ are comparable, i.e., the pair $(y,x)$ can't be in
  $\inc(P,{\leq})$. Then $(y,x)\not\in\Cset$, a contradiction. %
  Thirdly, $(x,y)\in \Cset$ and $(y,x)\in\Cset$. This may not occur by
  Claim~$(\star\star)$.

  \noindent Transitivity: let $(x,y)\in{\leq}\cup\Cset$ and
  $(y,z)\in{\leq}\cup\Cset$ we show $(y,z)\in{\leq}\cup\Cset$. We have to
  consider four cases. First, $(x,y)\in{\leq}$ and $(y,z)\in{\leq}$ implies
  $(x,z)\in{\leq}$. %
  Secondly, $(x,y)\in{\leq}$ and $(y,z)\in\Cset$ and assume that
  $(x,z)\not\in({\leq}\cup\Cset)$. Then $(z,x)\not \in \Cbar$, but
  $(z,x)\rightarrow (z,y)$, as $(x,y)\in{\leq}$. From Claim $(\star)$ follows
  that $(z,y)\not\in\Cbar$, a contradiction to $(y,z)\in \Cset$. The case
  $(x,y)\in \Cset$ and $(y,z)\in{\leq}$ is treated analogously. %
  Lastly, $(x,y) \in \Cset$ and $(y,z) \in \Cset$ and assume
  $(x,z)\not \in \Cset$, i.e., $(z,x)\in \Cbar$. There has to be an odd cycle in
  $P_1\cup P_2$ together with $(y,z)$, otherwise $(y,z)$ can be added to
  $P_1\cup P_2$ to create a larger biparite graph. By \Cref{thm:strict}, there
  also hast to be a strict odd cycle. Let the neighbors of $(y,z)$ in
  $P_1\cup P_2$ be $(a,b)$ and $(c,d)$. Then $a < z$, $c < z$, $y < b$ and
  $y < d$, and the pairs $(a,b)$ and $(c,d)$ are connected by a strict path on
  an even number of vertices through the strict odd cycle. By the same argument
  there are pairs $(e,f)$ and $(g,h)$ with $e < y$, $g < y$, $x < f$ and $x < h$
  and $(e,f)$ and $(g,h)$ connected by a strict odd path. We now show, that
  $(z,x)$ is in an odd cycle with $P_1\cup P_2$ to yield a contradiction. For
  this consider the following paths $A=(c,f)(z,x)(h,a)$, $B=(d,h)(h,g)$ and
  $C=(f,e)(e,b)$. Each of those is a path in $\tig(P,{\leq})$ by definition.

  \begin{claim}
    Between $(h,a)$ and $(d,h)$ there is a path on an even number of vertices in
    $P_1\cup P_2$. To show this, let $(a_1,b_1),\ldots ,(a_{2k},b_{2k})$ be the
    strict path on an even number of vertices connecting $(a,b)$ and $(d,c)$
    such that $(a_1,b_1)$ = $(a,b)$ and $(a_{2k},b_{2k})=(d,c)$. This implies
    $a_{2i+1}< b_{2i+2}$, $a_{2i}< b_{2i+1}$, $b_{2i+1}> a_{2i+2}$ and
    $a_{2i}> b_{2i+1}$ and for each $i \in \{0,\ldots ,k-1\}$. However, this
    yields the path $(h,a)=(h,a_1)(b_2,h)(h,a_3)\cdots(b_{2k},h)=(d,h)$ which is
    even and connecting $(h,a)$ and $(d,h)$ in $P_1\cup P_2$, as required.
  \end{claim}

  \noindent Analogously we obtain a path between $(c,f)$ and $(b,f)$ on an even
  number of vertices. Moreover $(h,g)$ and $(f,e)$ are also connected by a path
  on an even number of vertices in $P_1\cup P_2$, since $(g,h)$ and $(e,f)$ are
  connected by an even path. Reversing all pairs of this path yields the
  required path. Combining the segments $A$, $B$ and $C$ with the paths
  connecting them yields an odd cycle in $P_1\cup P_2\cup \{(z,x)\}$, a
  contradiction.  \null\hfill $\square$
\end{proof}

\subsection{The Importance of Inclusion-Maximality}
Consider the standard example $S_3=(X,{\leq})$ where the ground set is defined as
$X=\{a_1,a_2,a_3,b_1,b_2,b_3\}$ and $a_i\not\leq a_j$ for $i \neq j$,
$b_i \not \leq b_j$ for $i \neq j$ and $a_i\leq a_j$ if and only if $i \neq
j$. This example is well known to be a three-dimensional ordered set. However,
it becomes two-dimensional by inserting a single pair $(a_i,b_i)$ into the order
relation $\leq$ for some index $i\in\{1,2,3\}$, i.e., the transitive
incomparability graph becomes bipartite if we remove for example the pair
$(b_1,a_1)$. Now assume we do not require to removing a set minimal with respect
to set inclusion, take for example both pairs $(a_1,b_1)$ and
$(b_1,a_1)$. However the set $(X,{\leq}\cup\{(a_1,b_1),(b_1,a_1)\})$ is not an
ordered set, as both pairs $(a_1,b_1)$ and $(b_1,a_1)$ are in
${\leq}\cup\{(a_1,b_1),(b_1,a_1)\}$. This is a conflict with $a_1\neq b_1$,
i.e., we do not preserve the order property as the resulting relation is not
antisymmetric.

\subsection{Bipartite Subgraph is not Sufficient}
\label{sec:insuffic}
From \Cref{thm:main} one might conjecture that finding an inclusion-minimal
bipartite subgraph of $\tig(X,{\leq})$ is sufficient to find a two-dimensional
extension of $(X,{\leq})$. However, it may occur that two pairs are not
incompatible in $\tig(X,{\leq})$ and are incompatible in
$\tig(X,{\leq}\cup\Cset)$ with $\Cset$ being an inclusion-minimal set such that
$\tig(X,{\leq})\backslash\Cbar$ is bipartite. This can arise in particular, if
the following pattern occurs: the ordered set contains the elements $a,b,c$ and
$d$, such that all elements are pairwise incomparable, except $b < d$ and
$(c,a)\in \Cset$ exhibits this observation, see \Cref{fig:counterexample1}. Then
$(a,b)$ and $(c,d)$ are not incompatible in $\tig(X,{\leq})$, but they become
incompatible with the relation ${\leq}\cup\Cset$.  An example for this is provided by the
ordered set in \Cref{fig:counterexample2}.  The transitive incomparability graph
of this ordered set has 206 vertices. By removing pairs
 
\begin{figure}[t]\centering 
  \begin{minipage}[t]{0.4\textwidth} 
    \centering
    \includegraphics{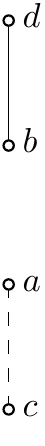}
    \caption{An example how new incompatibilities can arise. $\leq$ is the
      continous line, $\Cset$ is the dashed line. $(a,b)$ and $(c,d)$ are not
      incompatible in $\leq$ and incompatible in ${\leq} \cup \Cset$.}
    \label{fig:counterexample1}
  \end{minipage}
  \hfill
  \begin{minipage}[t]{0.55\textwidth} 
    \centering 
    \includegraphics{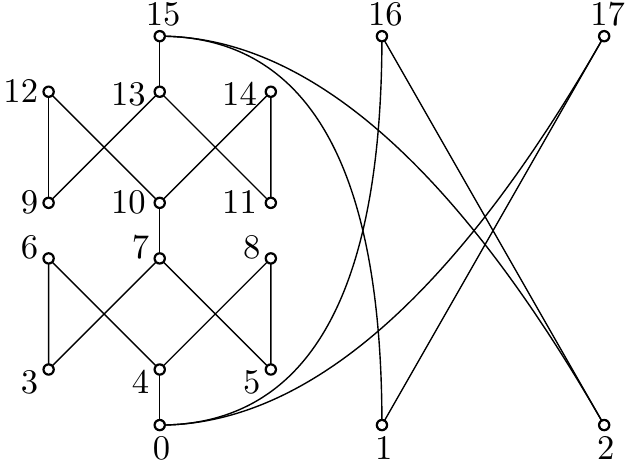}
    \caption{An example for an ordered set that has a transitive incompatibility
    graph with an inclusion-minimal bipartite subgraph of the transitive
    incompatibility graph that does not give rise to a two-dimension extension.}
    \label{fig:counterexample2}
  \end{minipage}
\end{figure}

\noindent \quad \texttt{\small (5,9), (3,13), (6,15), (17,15), (5,15), (5,13), (11,10),
  (9,6), (5,6), \\\null\quad(4,15), (1,8), (3,15), (11,12), (2,8), (11,7), (0,15), (1,16)},

\noindent the transitive incomparability graph of this ordered set becomes
bipartite. However, if we once again compute the transitive incomparability
graph of the new ordered set, we see that it is not bipartite, i.e., the new
ordered set is once again not two-dimensional. We have to add the additional
pair \texttt{(17,8)} to make the graph two-dimensional. It may be remarked at
this point that it is in fact possible to make the transitive incomparability
graph bipartite by adding only eleven pairs (in contrast to the seventeen added
in this particular example), see \Cref{fig:counterexample_alg}. Those pairs give
rise to a two-dimension extension.

\section{Algorithm}
\label{sec:alg}
Building up on the ideas and notions from the previous sections we propose the
Algorithm \texttt{DimDraw} as depicted in \Cref{alg:dimdraw}. Given an ordered
set $(X,{\leq})$, one calls \texttt{Compute\_Coordinates}. Until this procedure
identifies a conjugate order using the \texttt{Comupte\_Conjugate\_Order} (and
in turn the algorithm of Golumbic \cite{golumbic1977complexity}), it computes
bipartite subgraphs of the transitive incompatibility graphs. Furthermore it
adds the so-computed pairs to the $\leq$. For finite ground sets the algorithm
terminates after finitely many steps.

\begin{algorithm}[t]
  \caption{\texttt{DimDraw}}
  \label{alg:dimdraw}
  Execute \texttt{Compute\_Coordinates} on the ordered set that is to be drawn.
  \hrule
\begin{lstlisting}[mathescape=true]
Input: Ordered set ($P$,$\leq$)
Output: Conjugate order of ($P$,$\leq$)

def Compute_Conjugate_Order($P$,$\leq$):
    $C$ $\coloneqq$ Cocomparability_Graph($P$,$\leq$)
    if Has_Transitive_Orientation($C$):
        ($P$,$\leq_C$) $\coloneqq$ Transitive_Orientation($C$)
        return $\leq_C$
    else:
        return $\perp$
\end{lstlisting}
\hrule
\begin{lstlisting}[mathescape=true]
Input: Ordered set ($P$,$\leq$)
Output: Coordinates of the drawing of ($P$,$\leq$)

def Compute_Coordinates($P,\leq$):
    $\leq_C$ $\coloneqq$ Compute_Conjugate_Order($P,\leq)$
    $\mathcal{C}$ $\coloneqq$ $\emptyset$
    while $\leq_C$ $=$ $\perp$:
        $I$ $\coloneqq$ Transitive_Incomparability_Graph($P,{\leq}\cup\Cset$)
        $B$ $\coloneqq$ Maximum_Bipartite_Subgraph($I$)
        $\mathcal{C}$ $=$ $\mathcal{C}\cup V(I\backslash B)^{-1}$
        $\leq_C$ $\coloneqq$ Compute_Conjugate_Order($P$,${\leq} \cup \Cset$)
    $\leq_1$ $\coloneqq$ ${\leq}\cup{\leq_C}$
    $\leq_2$ $\coloneqq$ ${\leq}\cup{\geq_C}$
    for $x$ in $P$:
        Coord($x$,1) $\coloneqq$ $|\{k\mid k \leq_1 x\}|-1$
        Coord($x$,2) $\coloneqq$ $|\{k\mid k \leq_2 x\}|-1$
    return Coord
\end{lstlisting}
\end{algorithm}

\subsection{Postprocessing}
The algorithm does not prevent a point from being placed on top of lines
connecting two different points. This however is not allowed in order
diagrams. Possible strategies to deal with this problem are the following. One
strategy is to modify the coordinate system, such that the marks of different
integers are not equidistant. Another one is to perturb the points on lines
slightly. A third way is to use splines for drawing the line in order to avoid
crossing the point.

\section{Finding large induced bipartite subgraphs}
\label{sec:bip}
Our algorithm has to compute an inclusion-minimal set of vertices, such that
removing those vertices from the transitive incompatibility graph results in a bipartite
graph. Deciding for a graph whether it is possible to make it bipartite by
removing a set of cardinality $k$ is known to be $\mathcal{NP}$-complete
\cite{LEWIS1980219}. Even approximations are known to be in this complexity
class \cite{approx}. Therefore we propose different approaches.

\subsection{Exact solution using a reduction to SAT}

Even for small example, i.e., orders on less than 30 elements, a naive approach
is infeasible.  As we will see in \Cref{sec:evaluation} the question for
computing $\Cset$ results in $\binom{182}{5}$ tests for an example on $19$
elements (\cref{fig:fische}) and $\binom{294}{29}$ tests for example
(\cref{fig:car}) on 24 elements. Therefore we need a more sophisticated solution
for the problem. We reduce the problem for finding biparite subgraphs to a SAT
problem and then solve this problem with a SAT-Solver, in our case MiniSat
\cite{minisat} in version 2.2. In other words we want to know for some graph
$G=(V,E)$ on $n$ vertices and $m$ edges whether by deleting $k$ vertices we can
make the graph bipartite. Solving is done by finding a partition of $V$ into the
three sets $P_1,P_2,\Cbar$, such that $P_1$ and $P_2$ are independent sets and
$|\Cbar|\leq k$. For this we construct a conjugative normal form as follows: for
each vertex $v_i$ we have three variables, call them
$V_{i,1},V_{i,2},V_{i,3}$. The first two variables indicate, whether the vertex
is placed in $P_1$ or $P_2$, respectively and the third variable indicates,
whether the vertex is placed in $\Cbar$. For each vertex we have to guarantees,
that it is placed in at least one of $P_1$, $P_2$ or $\Cbar$, i.e., at least one
of $V_{i,1},V_{i,2},V_{i,3}$ is true for each $i$. We achieve this with the
clause $V_{i,1} \vee V_{i,2}\vee V_{i,3}$ for all $i \in \{1,\ldots,n\}$. Also
we want to guarantee that $P_1$ and $P_2$ are independent sets, i.e., no two
vertices in $P_1$ or $P_2$ are connected by an edge. We achive this by adding
the two clauses $\neg V_{i,1} \vee \neg V_{j,1}$ and
$\neg V_{i,2} \vee \neg V_{j,2}$ for each edge $\{v_i,v_j\}\in E$. This ensures
that no two vertices connected by an edge are placed in set $P_1$ or
$P_2$. Finally we have to guarantee, that there are at most $k$ vertices in
$P_3$, i.e., that at most $k$ of the variables
$\{V_{1,3},V_{2,3},\ldots,V_{n,3}\}$ are true. There are multiple ways to
achieve this. We employ the method as described in~\cite{conf/cp/Sinz05}. This
results in an additional $(n-1)\cdot k$ auxiliary variables and $2nk+n-3k-1$
clauses. Altogether our SAT instance has in total $(n-1)(k+3)+3$ variables and
$2m+2nk+2n-3k-1$ clauses. For this CNF the following holds by construction.
\begin{theorem}
  The SAT instance as constructed above is satisfiable if and only if $G$ has an
  induced bipartite subgraph on $n-k$ vertices.
\end{theorem}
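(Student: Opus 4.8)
The plan is to prove the two implications separately, exploiting the direct correspondence between a truth assignment of the core variables $V_{i,1},V_{i,2},V_{i,3}$ and an assignment of each vertex $v_i$ to one of the three classes $P_1,P_2,\Cbar$. The whole argument is essentially a verification that each family of clauses expresses exactly the property it was designed to express, so I would not expect deep difficulties, only careful bookkeeping.

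For the backward direction, suppose $G$ has an induced bipartite subgraph on $n-k$ vertices, with bipartition $(P_1,P_2)$, and let $\Cbar$ denote the $k$ removed vertices. I would set $V_{i,1}$ true iff $v_i\in P_1$, $V_{i,2}$ true iff $v_i\in P_2$, and $V_{i,3}$ true iff $v_i\in\Cbar$. Since these three classes cover $V$, every at-least-one clause $V_{i,1}\vee V_{i,2}\vee V_{i,3}$ is satisfied; since $P_1$ and $P_2$ are independent, no edge has both endpoints in $P_1$ (resp.\ $P_2$), so every clause $\neg V_{i,1}\vee\neg V_{j,1}$ and $\neg V_{i,2}\vee\neg V_{j,2}$ holds; and since exactly $k\le k$ of the $V_{i,3}$ are true, the at-most-$k$ constraint is satisfiable, so its auxiliary variables can be set consistently by the correctness of the encoding of~\cite{conf/cp/Sinz05}.

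For the forward direction, suppose the instance is satisfiable. I would read off a partition as follows: place $v_i$ in $\Cbar$ if $V_{i,3}$ is true; otherwise the at-least-one clause forces $V_{i,1}$ or $V_{i,2}$ to be true, so place $v_i$ in $P_1$ if $V_{i,1}$ is true and in $P_2$ otherwise (in which case $V_{i,2}$ must be true). The delicate point, and the step requiring the most care, is that a satisfying assignment may set several of the three variables of one vertex to true; the canonical choice above resolves this and guarantees that whenever $v_i$ lands in $P_1$ its variable $V_{i,1}$ is true, and whenever it lands in $P_2$ its variable $V_{i,2}$ is true. The independence clauses then rule out any edge inside $P_1$ or inside $P_2$, so the subgraph induced on $P_1\cup P_2=V\setminus\Cbar$ is bipartite, and the cardinality constraint gives $|\Cbar|\le k$, hence $|V\setminus\Cbar|\ge n-k$. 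Finally, since every induced subgraph of a bipartite graph is again bipartite, deleting surplus vertices yields an induced bipartite subgraph on exactly $n-k$ vertices, as required.

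The only nonroutine ingredient is the at-most-$k$ cardinality constraint: both directions lean on the fact that the sequential-counter encoding of~\cite{conf/cp/Sinz05} is satisfiable precisely when at most $k$ of the literals $V_{1,3},\dots,V_{n,3}$ are true. I would treat this as a black box, citing its correctness rather than re-deriving it; everything else reduces to the clause-by-clause verification sketched above.
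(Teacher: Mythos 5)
Your proposal is correct and follows the same route as the paper, which simply asserts the theorem ``holds by construction'': your clause-by-clause verification, with the at-most-$k$ encoding of~\cite{conf/cp/Sinz05} cited as a correct black box, is exactly that construction made explicit. The two points you flag --- resolving overlapping true variables via a canonical priority in the forward direction, and trimming the bipartite part from at least $n-k$ down to exactly $n-k$ vertices --- are handled correctly and are the only details the paper leaves implicit.
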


\noindent Now we build this SAT instance for $k$ increasing from 1 until it is
satisfiable. Then the set $\{v_i \mid V_{i,3}=\text{true} \}$ is exactly the
subset of vertices we have to remove to make the graph bipartite. Obviously
methods like binary search may be applied here.  Furthermore, we may also plug
heuristic procedures into our algorithm to find an inclusion-minimal set
$\Cset$. We experimented on this with a greedy algorithm, a simulated annealing
approach and a genetic algorithm with promising results, especially for the
genetic algorithm. However, it is preferred to use the SAT algorithm as long as
there is enough computational power and the problem instances are not too large.

\section{Experimental evaluation}
\label{sec:evaluation}
\begin{figure}[t]
  \null\hfill
  \includegraphics[height=10em]{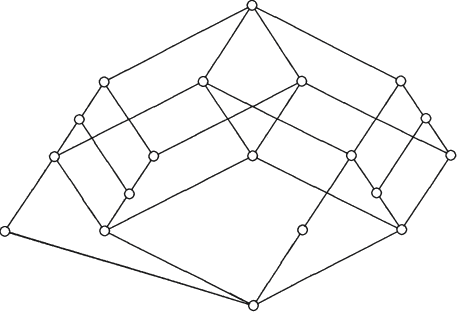}\hfill
  \includegraphics[height=10em]{fische}\hfill\null
  \caption{Two drawings of the ``Living Beings and Water'' lattice, by hand
    (left) and with our algorithm (right).}
  \label{fig:fische}
\end{figure}

\begin{figure}[t]
  \null\hfill
  \includegraphics[height=10em]{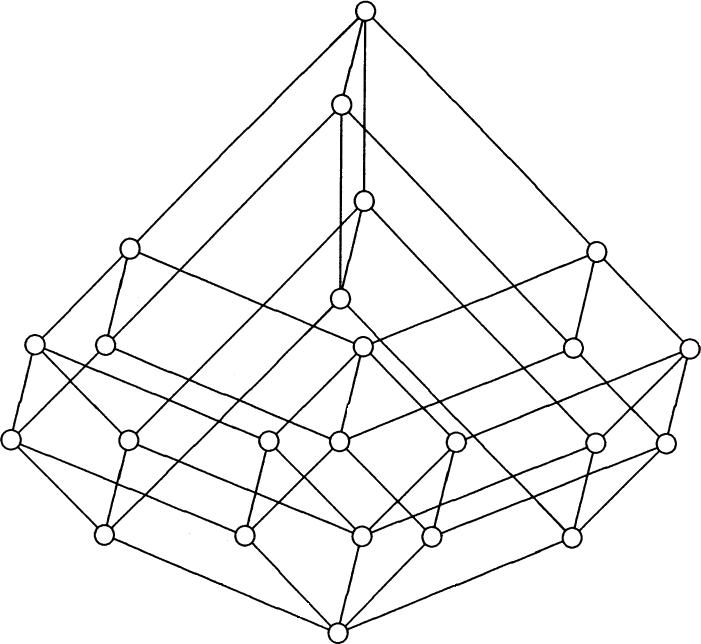} \hfill
  \includegraphics[height=10em]{car1} \hfill
  \includegraphics[height=10em]{car2} \hfill\null
  \caption{The ``Drive concepts for motorcars'' lattice. Drawn by an expert
    (left) and two drawings of our algorithm (middle and right).}
  \label{fig:car}
\end{figure}

The \texttt{DimDraw} algorithm was originally designed with the idea in mind to
draw the order diagram of lattices. Those are employed in a particular in Formal
Concept Analysis (FCA), a mathematical theory for analyzing data. Note that any
complete lattice can be represented by a concept lattice in FCA. We tested our
algorithm on all lattice examples from a standard literature book on FCA
\cite{fca-book}. In all those cases the quality of the produced drawings came
close to examples hand drawn by experts. For example, consider the lattice that
arises from ``Living Beings and Water'' \cite[p.18]{fca-book}. In
\Cref{fig:fische} we compare the hand-drawn example (left) to the result drawn
by our algorithm (right). For \Cref{fig:car} \cite[p.40]{fca-book} there are two
different solutions depicted, both having the minimal number of pairs
inserted, note that the algorithm stops after it finds a single solution.

Because of the importance of drawings in FCA we tested the algorithm on every
lattice with eleven or less vertices. The reader might want to have a look at the
document containing all 44994 drawings on 7499 pages
\cite{durrschnabel_dominik_2019_3242627}.

Concluding the experiments we want to present an example that our algorithm also
works on non-lattices. Consider the ordered set from
\Cref{fig:counterexample2}. While the hand-drawn version of this order diagram
makes use of splines, our algorithm-generated version
\Cref{fig:counterexample_alg} uses exclusively straight lines.

An interesting observation during the experiments was the following: for all
examples of that we are aware, even including those not presented in this work,
one pass of the SAT solver was sufficient for reducing the order dimension to
two. This is surprising, in particular in light of~\cref{fig:counterexample2}
from~\cref{sec:insuffic}.

\begin{figure}[t]
  \centering
  \includegraphics[height=10em]{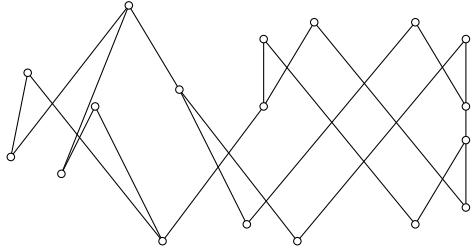}
  \caption{A non-lattice example drawing produced by the algorithm. See
    \Cref{fig:counterexample2} for a hand-drawn version.}
  \label{fig:counterexample_alg}
\end{figure}

\section{Conclusion and Outlook}
\label{sec:outlook}
We presented in this work a novel approach for drawing diagrams of order
relations. To this end we employed an idea by Doignon et al.\ relating order
dimension and bipartiteness of graphs and proved an extension. Furthermore, we
linked the naturally emerging problem to SAT. Finally, we demonstrated various
drawings in an experimental evaluation. The drawings produced by the algorithm
were, in our opinion, satisfying.  We would have liked to compare our algorithm
(exact and heuristic type) to the heuristics developed
in~\cite{conf/gd/KornaropoulosT12a}. Unfortunately, we were not able to
reproduce their results based on the provided description.

A notable observation is the fact that in all our experiments the SAT-Solver
blend of \texttt{DimDraw} was able to produce a solution in the first pass,
i.e., the algorithm found a truly minimal two-dimension extension. This raises
the natural question, whether the maximal induced bipartite subgraph approach
does always result in a minimal two-dimension extension. Further open questions
are concerned with employing heuristics and to improve the postprocessing
stage. The SAT-solver version of \texttt{DimDraw} is included in the software
conexp-clj\footnote{\url{https://github.com/tomhanika/conexp-clj}}. At a later
time we also want to include heuristic versions.

\subsection*{Acknowledgement}
The authors would like to thank Torsten Ueckerdt for pointing out the research
about diametral pairs and Maximilian Stubbemann for helpful discussions.

\bibliographystyle{bib/splncs04}
\bibliography{bib/paper}

\newpage
\appendix

\section{Additional Drawings}
\label{sec:additional-drawings}

In this section we provide some additional drawings generated by
\texttt{DimDraw}.

\begin{figure}
  \null\hfill
  \includegraphics[height=12em]{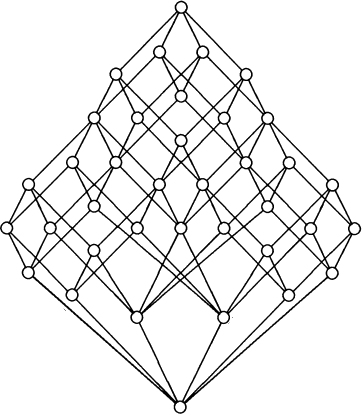}
  \hfill
  \includegraphics[height=12em]{konvex_ordinal}\hfill\null  
  \caption{An example from~\cite[p.53]{fca-book}. The hand-drawn version is on
    the left, the algorithm-generated version on the right.}
\end{figure}

\begin{figure}
  \null\hfill
  \includegraphics[height=20em]{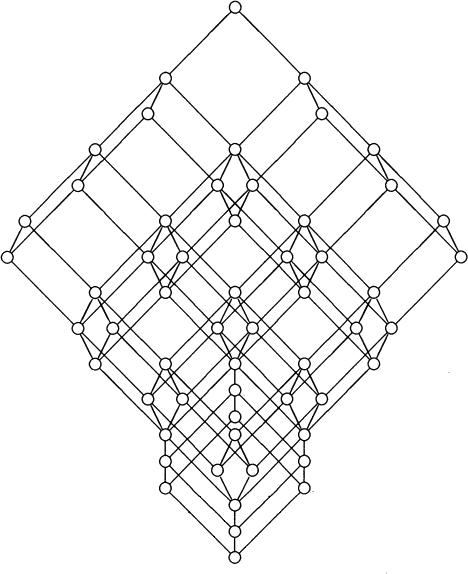}\hfill 
  \includegraphics[height=20em]{lattice} \hfill\null
  \caption{An example from~\cite[p.35]{fca-book}. The hand-drawn
    version is on the left, the algorithm-generated version on the right.}
\end{figure}

\begin{figure}
  \centering
  \begin{minipage}{0.6\linewidth}
    \includegraphics[height=10em]{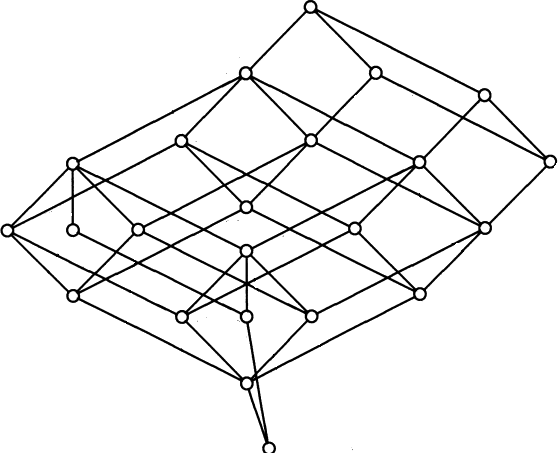}
    \includegraphics[height=10em]{dritte_welt}
    \caption{An example from~\cite[p.30]{fca-book}. The hand-drawn version is on
      the left, the algorithm-generated version on the right.}
  \end{minipage}
  \hfill
  \begin{minipage}{0.35\linewidth}
    \includegraphics[height=10em]{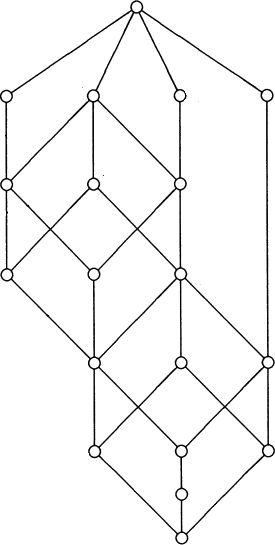}
    \includegraphics[height=10em]{forum}
    \caption{An example from~\cite[p.45]{fca-book}. The hand-drawn version is on
      the left, the algorithm-generated version on the right.}
  \end{minipage}

\end{figure}

\begin{figure}
  \null\hfill
  \includegraphics[height=10em]{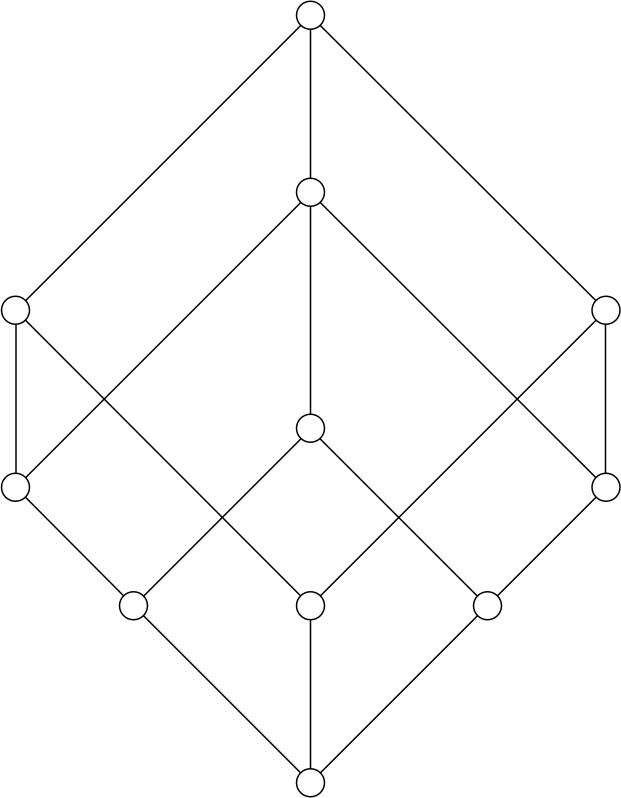}\hfill
  \includegraphics[height=10em]{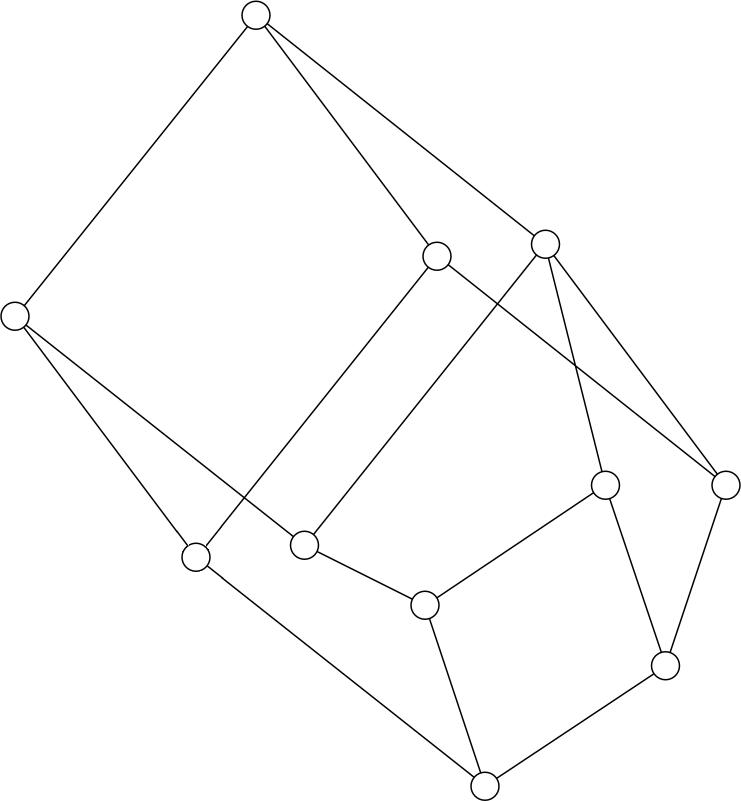}\hfill
  \includegraphics[height=10em]{ganter} \hfill\null
  \caption{Order diagram of the lattice used in \cite{ganteradd}. From left to
    right: Hand drawn, Ganter's algorithm, our algorithm.}
\end{figure}

\begin{figure}
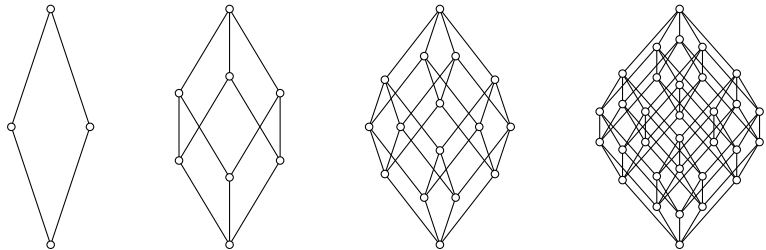

  \null\hfill
  \includegraphics[height=10em]{bool2}\hfill
  \includegraphics[height=10em]{bool3}\hfill
  \includegraphics[height=10em]{bool4}\hfill
  \includegraphics[height=10em]{bool5}\hfill\null
  \caption{Order diagrams for boolean lattices of dimension 2, 3, 4 and 5 drawn
    by our algorithm}
\end{figure}

\end{document}